\begin{document}
\newtheorem{theorem}{\bf~~Theorem}
\newtheorem{remark}{\bf~~Remark}
\newtheorem{definition}{\bf~~Definition}
\newtheorem{lemma}{\bf~~Lemma}
\newtheorem{preliminary}{\bf~~Preliminary}
\renewcommand\arraystretch{0.9}
\newtheorem{prop}{Proposition}

\title{Near-Far Field Channel Modeling for Holographic MIMO Using Expectation-Maximization Methods\vspace{-0.3cm}}

\author{\IEEEauthorblockN{
        {Houfeng Chen\IEEEauthorrefmark{1}}, 
        {Shuhao Zeng\IEEEauthorrefmark{2}}, 
        {Hao Guo\IEEEauthorrefmark{3}}, 
        {Tommy Svensson\IEEEauthorrefmark{3}}, and 
        {Hongliang Zhang\IEEEauthorrefmark{2}}
        }
    \IEEEauthorblockA{\IEEEauthorrefmark{1}School of Electronic Engineering and Computer Science, Peking University, Beijing, China}
    \IEEEauthorblockA{\IEEEauthorrefmark{2}Department of Electronics, Peking University, Beijing, China}
    \IEEEauthorblockA{\IEEEauthorrefmark{3}Department of Electrical Engineering, Chalmers University of Technology, Gothenburg, Sweden}
    \vspace{-0.9cm}
}

\maketitle
\begin{abstract}
Holographic Multiple-Input Multiple-Output (HMIMO), which densely integrates numerous antennas into a limited space, is anticipated to provide higher rates for future 6G wireless communications. The increase in antenna aperture size makes the near-field region enlarge, causing some users to be located in the near-field region. Thus, we are facing a hybrid near-field and far-field communication problem, where conventional far-field modeling methods may not work well. 
In this paper, we propose a near-far field channel model that does not presuppose whether each path is near-field or far-field, different from the existing work requiring the ratio of the number of near-field paths to that of far-field paths as prior knowledge. However, this gives rise to a new challenge for accurately modeling the channel, as conventional methods of obtaining channel model parameters are not applicable to this model. Therefore, we propose a new method, Expectation-Maximization (EM)-based Near-Far Field Channel Modeling, to obtain channel model parameters, which considers whether each path is near-field or far-field as a hidden variable, and optimizes the hidden variables and channel model parameters through an alternating iteration method. Simulation results show that our method is superior to conventional near-field and far-field algorithms in fitting the near-far field channel in terms of outage probability.


\end{abstract}
\vspace{0.2cm}
\begin{IEEEkeywords}
    Near-far field, channel modeling, holographic MIMO, expectation-maximization, outage probability.
\end{IEEEkeywords}

\vspace{0.1cm}
\section{Introduction}
Holographic Multiple-Input-Multiple-Output (HMIMO) is considered as a promising wireless communication technology for 6G networks~\cite{HMIMO_review}. By densely integrating numerous antennas or reconfigurable elements within a limited space~\cite{HMIMO_integrate}, HMIMO can significantly enhance beamforming gains and achieve higher spectral efficiency. Due to the increased radiation aperture size of the antenna array for HMIMO, the radiating near-field region of the antenna array has been extended significantly. Consequently, some scatters are located in the antenna array's near-field region, where electromagnetic waves exhibit spherical wave characteristics, while the others are situated in the far-field region, where electromagnetic waves are characterized by planar waves. This phenomenon is known as near-far field communication~\cite{HMIMO_integrate}.

Most existing works focus on either the far-field~\cite{OMP_2016} or the near-field~\cite{Cui_2022} channel modeling. In \cite{OMP_2016}, the proposed far-field channel relies on the assumption of far-field planar waves, based on which, the array steering vector of the far-field channel is only related to the angles between the base station (BS) and the scatters. Differently, in \cite{Cui_2022}, the spherical assumption makes the array steering vector of the near-field channel depend on both the angles and the distances between the BS and the scatters. Few initial works \cite{Wei_2022}, \cite{Hu_2023} consider near-far field channel models that contain both far-field and near-field components, which requires the ratio of the number of near-field paths to that of far-field paths as prior knowledge.
However, due to the complex wireless propagation environment, it is challenging to acquire this ratio in practical systems.

To address the above issue, we propose a near-far field channel model without requiring the ratio of near-field paths to far-field paths as prior knowledge. In the designed channel model, it is not pre-known whether each path is near-field or far-field, which, however, gives rise to a new challenge: the specific expression of the array steering vector of each path is also unknown. This leads to performance degradation using expressions of existing channel models, which requires a new model to capture wireless propagation characteristics.

In this paper, we propose a near-far field channel model to address the above issue. Specifically, note that whether each path is near-field or far-field is unknown and affects the specific expression of the channel distribution, which implies that it is a hidden variable. Therefore, we propose an Expectation-Maximization (EM)-based Near-Far Field Channel Modeling algorithm. This algorithm iteratively alternates between two key steps: the expectation step (E-step) and the maximization step (M-step), thereby optimizing both hidden variables and channel model parameters. Moreover, the outage probability using the proposed channel model is derived, based on which the performance of the proposed method can be analytically evaluated. Simulation results show that our method is superior to conventional near-field and far-field algorithms in fitting the near-far field channel.



The rest of the paper is organized as follows. In Section \ref{sysmod}, the communication scenario, channel model and signal model are introduced. In Section \ref{section3}, we first build the near-far field channel and propose the EM algorithm for near-far field channel modeling. Then, the outage probability is derived for evaluating the system's performance. Simulation results are provided in Section \ref{SIM} to show the effectiveness of the proposed near-far channel model. Finally, conclusion remarks are drawn in Section \ref{Conclusion}.

\begin{figure}
    \centering
    \includegraphics[width=0.42\textwidth]{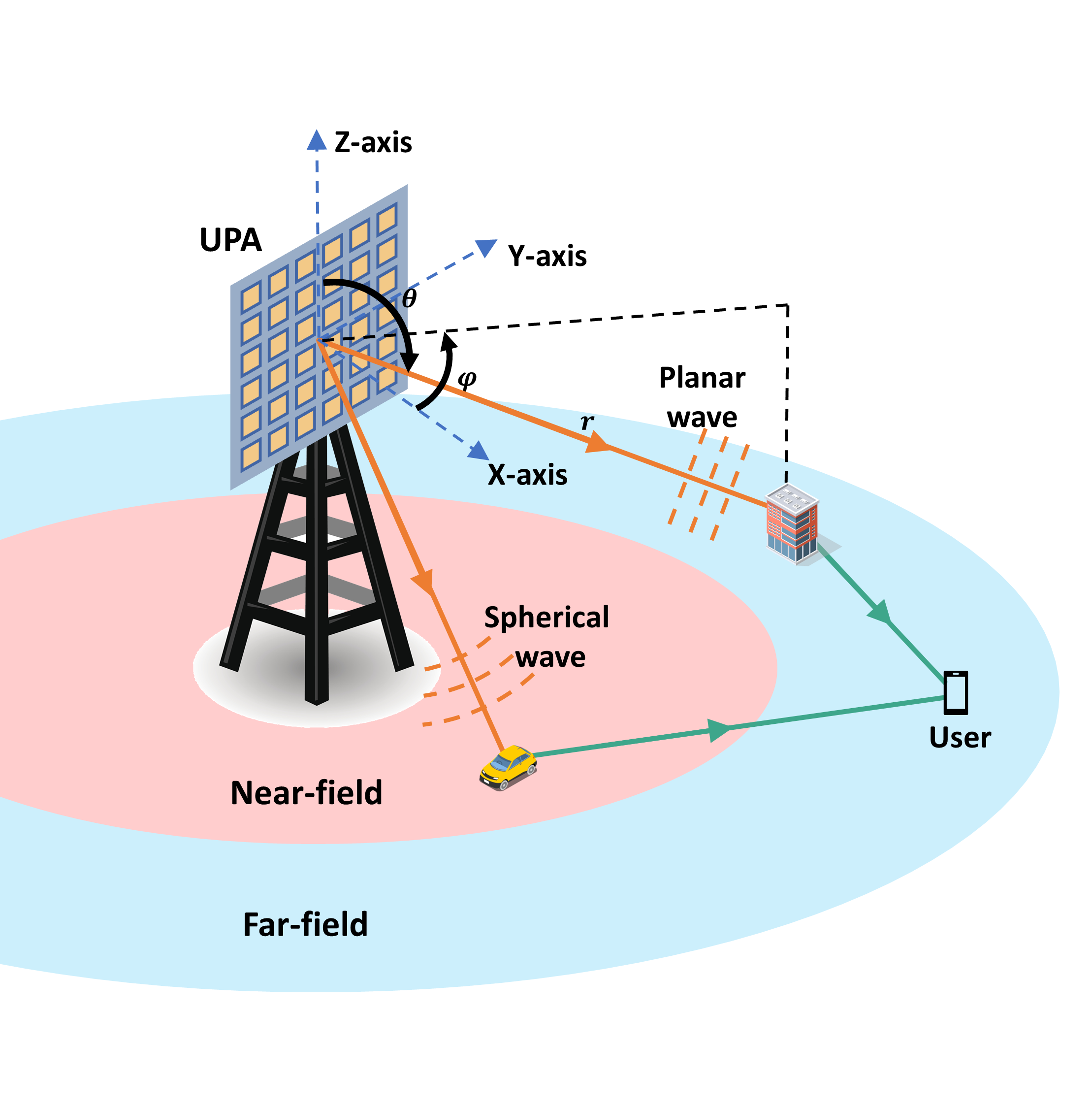}
    \caption{The HMIMO system equipped with UPA when scatters in both near field and far field. }
    \label{figure_scenario}
    \vspace{-0.2cm}
\end{figure}

\vspace{-0.1cm}
\section{System Model}\label{sysmod}
In this section, we first introduce the communication scenario of holographic MIMO systems. Then, we introduce the near-far field channel model. Finally, the signal model in this communication scenario is given.

\subsection{Scenario Description} 
We consider a downlink HMIMO communication system, where the BS is equipped with one extremely large uniform planar array (UPA) consisting of $N = N_1 \times N_2$ antennas to communicate with single-antenna users.
As depicted in Fig.~\ref{figure_scenario}, the electromagnetic radiation field of the antenna array at the BS can be divided into two regions: near-field and far-field~\cite{Cui_2022}. The boundary of these two fields is typically determined by the Rayleigh distance~\cite{Rayleigh Distance}, which increases with the size of the antenna array. Since the antenna array under consideration contains a large number of antenna elements, the boundary is comparable to the cell radius. Consequently, the scatters can be located in both the near-field and far-field regions~\cite{Wei_2022}.

\subsection{Channel Models}
We consider the multipath channel from the BS to users as a superposition of $L$ deterministic paths and additional diffuse signal components \cite{J.D._2020}. These deterministic paths represent the primary contribution of radio waves propagating from the BS to users through scatters, while the diffuse signal components correspond to the other numerous but weak multipath components~\cite{channel_model_2022}, where the joint effect of these diffuse components can be modeled with a random variable $\bm{Z}$, which follows zero-mean complex Gaussian distributions, i.e., $\bm{Z} \sim \mathcal{CN}(0, \sigma^2 \bm{I})$. Therefore, the channel can be expressed as
{\setlength\abovedisplayskip{0.1cm}
        \setlength\belowdisplayskip{0.1cm}
        \begin{align}
            \label{channel model}
            \bm{h} = \sum_{l=1}^{L} \beta_l \bm{s}(\theta_l, \phi_l, r_l) + \bm{Z},
        \end{align}
    where $\beta_l$ is the path gain of the $l$-th deterministic path}, $\bm{s}(\theta_l, \phi_l, r_l)$ is the array steering vector of the UPA, and it takes on different expressions for the far-field and near-field paths, respectively, which will be elaborated on in the following.

\subsubsection{Far-field Path}
For a far-field path, where the scatter corresponding to the path is in the far field of the antenna array, the corresponding steering vector follows the planar wave assumption and can be expressed as~\cite{OMP_2016}
    {\setlength\abovedisplayskip{0.3cm}
        \setlength\belowdisplayskip{0.3cm}
        \begin{align}
        \label{far_vec}
            &\bm{s}(\theta, \phi, r) = \bm{a}_z(\theta) \otimes \bm{a}_y(\theta, \phi) \nonumber \\
            &= \frac{1}{\sqrt{N}}[1, e^{-j\frac{2\pi}{\lambda}d cos(\theta)}, \cdots, e^{-j\frac{2\pi}{\lambda}(N_2-1)d cos(\theta)}]^H \otimes \nonumber \\ 
            & [1, e^{-j\frac{2\pi}{\lambda}d sin(\phi)sin(\theta)}, \cdots, e^{-j\frac{2\pi}{\lambda}(N_1-1)d sin(\phi)sin(\theta)}]^H.
        \end{align}
 Here, $\otimes$ is the Kronecker product,} $\phi$ and $\theta$ represent angles of arrival (AoA) in azimuth and elevation as shown in Fig.~\ref{figure_scenario}. $d$ is the minimum antenna spacing of the array, and $\lambda$ is the carrier wavelength.

\subsubsection{Near-field Path}
When the scatter is located in the near field, the spherical wave assumption can more accurately capture the characteristics of near-field electromagnetic waves compared with the planar wave assumption. Therefore, the near-field steering vector can be expressed as~\cite{near_field_vector}
    {\setlength\abovedisplayskip{0.1cm}
        \setlength\belowdisplayskip{0.1cm}
        \begin{align}
            \label{near_vec}
            \bm{s}(\theta, \phi, r) = \frac{1}{\sqrt{N}}&[e^{-j\frac{2\pi}{\lambda}(r^{(1, 1)}-r)}, \cdots, e^{-j\frac{2\pi}{\lambda}(r^{(n_1, n_2)}-r)}, \nonumber \\
           &\cdots, e^{-j\frac{2\pi}{\lambda}(r^{(N_1, N_2)}-r)} ]^H,
        \end{align}
        where $\phi$, $\theta$ and $r$ represent the azimuth AoA,} elevation AoA, and the distance between the center of the array and the scatter, respectively. Furthermore, $r^{(n_1, n_2)}$ represents the distance from the $(n_1, n_2)$-th antenna of the array to the scatter, which can be expressed as
        {\setlength\abovedisplayskip{0.3cm}
        \setlength\belowdisplayskip{0.3cm}
        \begin{align}
            r^{(n_1, n_2)}  =&  \big[( r \sin \theta \cos \phi ) ^ { 2 } + ( r \sin \theta \sin \phi  - n_1 d ) ^ { 2 } \nonumber \\
            &+ ( r \cos \theta - n_{2}d)^2 \big]^{\frac{1}{2}} .
        \end{align}}
        

Based on (\ref{channel model}), the distribution of the near-far field channel $\bm{h}$ can be written as
{\setlength\abovedisplayskip{0.1cm}
        \setlength\belowdisplayskip{0.1cm}
        \begin{align}
        \label{channel distribution}
            \bm{h} \sim \mathcal{CN} \Big(\sum_{l=1}^L \beta_l \bm{s}(\theta_l, \phi_l, r_l), \sigma^2\bm{I} \Big).
        \end{align}}

\subsection{Signal Models}
Assume that the BS transmits orthogonal signals to different users, ensuring that the signals received by different users are interference-free. Therefore, signals received by an arbitrary user can be denoted as
    {\setlength\abovedisplayskip{0.1cm}
        \setlength\belowdisplayskip{0.1cm}
        \begin{align}
            y = \sqrt{P_{\rm T}} \bm{h}^H \bm{x} + n,
        \end{align}
        where $P_{\rm T}$ is the transmit power of BS,} $\bm{h}$ is the channel in (\ref{channel model}), $y$ is the received complex baseband signal, $\bm{x}\sim \mathbb{C}^{N\times 1}$ is the transmitted signal with $\mathbb{E}[{||x||^2}]=1$, and $n \sim \mathcal{CN}(0, \hat{\sigma}^2)$ is the Gaussian noise.
        
\vspace{0.1cm}
\section{EM-Based Near-Far field Channel Modeling and Outage Probability Performance} \label{section3}
In this section, we aim to extract the channel parameters values $\Theta = \{\beta_l, \theta_l, \phi_l, r_l\}$ from the observed channel samples, so as to validate the proposed channel model and facilitate the analysis of system performance.

\vspace{-0.1cm}
\subsection{EM-Based Near-Far field Channel Modeling}
\vspace{-0.1cm}
Use $\bm{z} = [z_1, \cdots, z_L]^T$ to indicate whether the $L$ deterministic paths are near-field or far-field paths, where $z_l$ is a binary variable, with $z_l=0$ and $z_l=1$ indicating that the $l$-th path is near-field and far-field, respectively. Recall that the value of $z_l$ determines the representation of the steering vector $\bm{s}(\theta_l, \phi_l, r_l)$, as indicated in (\ref{far_vec}) and (\ref{near_vec}). Therefore, $z_l$ also has an effect on the distribution of $\bm{h}$ according to (\ref{channel distribution}).
As a result, to apply the traditional Maximum Likelihood Estimation (MLE) method to estimate the channel parameters, the value of $\bm{z}$ needs to be known in advance\footnote{If $\bm{z}$ is pre-known, the distribution of the channel can be derived from (\ref{channel distribution}), and the channel parameters $\Theta$ can be obtained based on the MLE method~\cite{Cui_2022}.}. However, $\bm{z}$ is difficult to obtain in practice as it's a hidden variable. To address this issue, we propose a near-far field channel modeling scheme based on the EM algorithm.
The main idea of the channel modeling scheme is to iteratively apply two steps, namely, the expectation (E)-step and the maximization (M)-step, where the posterior distribution of $\bm{z}$ and $\Theta$ are optimized, respectively. In the following, we will describe in detail how to perform the EM algorithm for near-far field channel modeling:

\subsubsection{\textbf{E-step}} 
In this step, given $S$ independent and identically distributed (i.i.d.) channel samples $\bm{H} = [\bm{h}_1, \cdots, \bm{h}_S]$ and the current estimates of $\Theta$ (denoted as $\Theta^{j-1}$), the posterior distribution of $\bm{z}$, i.e., $p(\bm{z}|\bm{H}, \Theta^{j-1})$ can be acquired according to Bayes' theorem as
{\setlength\abovedisplayskip{0.1cm}
    \setlength\belowdisplayskip{0.1cm}
\begin{align}
    \label{posterior_distribution}
    p(\bm{z}|\bm{H}, \Theta^{j-1}) &= \frac{p(\bm{H}, \bm{z} | \Theta^{j-1})}{\sum_{\bm{z'}} p(\bm{H}, \bm{z'} | \Theta^{j-1})} \nonumber \\
    &= \frac{p(\bm{z}|\Theta^{j-1}) p(\bm{H} | \bm{z}, \Theta^{j-1})}{\sum_{\bm{z'}} p(\bm{z'}|\Theta^{j-1}) p(\bm{H} | \bm{z'}, \Theta^{j-1})} ~,
\end{align}
    where $p(\bm{H}, \bm{z} | \Theta ^{j-1})$ is the joint probability of $\bm{H}$ and $\bm{z}$.} According to the conditional probability and the assumption of i.i.d. between channel samples,  $p(\bm{H}, \bm{z}|\Theta^{j-1})$ can be further expressed as
     {\setlength\abovedisplayskip{0.1cm}
            \setlength\belowdisplayskip{0.1cm}
            \begin{align}
            \label{conditional_H_z_to_Theta}
                p(\bm{H}, \bm{z}|\Theta^{j-1}) &= p(\bm{H} | \bm{z}, \Theta^{j-1}) \cdot p(\bm{z} | \Theta^{j-1}) \nonumber \\
                &= \Big(\prod_{s=1}^S p(\bm{h}_s| \bm{z}, \Theta^{j-1}) \Big)  \cdot p(\bm{z}|\Theta^{j-1}) ~.
            \end{align}
    Here, we would like to point out that the distribution of channel $\bm{h}_s$ for given $\bm{z}$ and $\Theta^{j-1}$ follows complex Gaussian distributions $\mathcal{CN}\Big(\sum_{l=1}^L \beta_l^{j-1} \bm{s}^H(\theta_l^{j-1}, \phi_l^{j-1}, r_l^{j-1} | z_l), \sigma^2\bm{I}\Big)$ and the probability density function can be calculated as}
    {\setlength\abovedisplayskip{0.1cm}
        \setlength\belowdisplayskip{0.1cm}
        \begin{align}
        \label{conditional_distribution_h_to_z_Theta}
            p(\bm{h}_s | \bm{z}, \Theta^{j-1})=\frac{1}{(\pi)^{N} |\sigma^2 I|} e^{-\bm{r}^{H}(\sigma^2\bm{I})^{-1}\bm{r}},
        \end{align}
        where $\bm{r} = \bm{h}_s - \sum\limits_{l=1}^L \beta_l \bm{s}^H(\theta_l^{j-1}, \phi_l^{j-1}, r_l^{j-1} | z_l)$. Depending on the value of $z_l$, $\bm{s}(\theta_l^{j-1}, \phi_l^{j-1}, r_l^{j-1}|z_l)$ can be determined as either the far-field steering vector defined in (\ref{far_vec}) or the near-field steering vector defined in (\ref{near_vec}). }
        
\setlength{\textfloatsep}{0.1cm} 
\begin{algorithm}[t]
    \caption{Expectation-Maximization Algorithm in Near-Far Field Channel Modeling}
    \renewcommand{\algorithmicrequire}{\textbf{Input:}}
    \renewcommand{\algorithmicensure}{\textbf{Output:}}
    \label{algorithm1}
    \begin{algorithmic}[1]
        \REQUIRE $S$ channel samples of $\bm{H}$, number of paths $L$, the convergence threshold $\delta$
        \STATE \textbf{Initialization}: $\Theta^0$
        \WHILE{True}
        \STATE \textit{\textbf{E-step}}: \\
        \quad Calculate the conditional expectation of the log-likelihood based on (\ref{posterior_distribution})-(\ref{conditional_distribution_h_to_z_Theta})
        \vspace{0.1cm}
        \STATE \textit{\textbf{M-step}}: 
        \STATE  \quad update model parameter $\Theta$ based on (\ref{conditional_expectation_of_loglikelihood})-(\ref{gradient_descent})
        \IF {$\|\Theta_{j} - \Theta_{j-1}\| \le \delta$}
        \STATE break
        \ENDIF
        \ENDWHILE
        \ENSURE Estimated channel model parameters $\Theta$,  posterior probability distribution of $\bm{z}$ 
    \end{algorithmic}
\end{algorithm}

\subsubsection{\textbf{M-step}}
In this step, given the posterior distribution of $\bm{z}$ in (\ref{posterior_distribution}), the model parameter $\Theta$ is updated using the MLE method. To be specific, based on Jensen's inequality and its equivalence conditions, the log-likelihood function is simplified to its conditional expectation~\cite{EM_algorithm_paper}, which can be formulated as
{\setlength\abovedisplayskip{0.1cm}
    \setlength\belowdisplayskip{0.1cm}
\begin{align}
\label{conditional_expectation_of_loglikelihood}
    &Q(\Theta | \Theta^{j-1}) = \mathbb{E}_{\bm{z}\sim p(\bm{z}|\bm{H}, \Theta^{j-1})} \Big[\log( p(\bm{H}, \bm{z} | \Theta) )\Big]~,
\end{align}
where $\mathbb{E}_{\bm{z}\sim p(\bm{z}|\bm{H}, \Theta^{j-1})}[\cdot]$ is the expectation with respect to $\bm{z}$ conditioned on $\bm{H}$ and $\Theta^{j-1}$.} Then, based on the MLE method, the estimated value of $\Theta$ in the $j$-th iteration can be obtained by maximizing the likelihood function in (\ref{conditional_expectation_of_loglikelihood}), i.e., the model parameter $\Theta$ in the $j$-th iteration can be updated as
{\setlength\abovedisplayskip{0.1cm}
    \setlength\belowdisplayskip{0.1cm}
    \begin{align}
    \label{update_parameter}
        \Theta^{j} = \mathop{\arg\max}\limits_{\Theta} Q(\Theta | \Theta^{j-1}).
    \end{align}
Here, we use gradient descent to update $\Theta$.} In the $j$-th iteration, the $n$-th gradient descent of $\Theta$ can be expressed as
{\setlength\abovedisplayskip{0.1cm}
    \setlength\belowdisplayskip{0.1cm}
    \begin{align}
    \label{gradient_descent}
        \Theta^{j,n} = \Theta^{j, n-1} - l_1 \cdot \nabla_\Theta Q(\Theta|\Theta^{j, n-1}),
\end{align}}where $\Theta^{j,n}$ denotes the estimated value of $\Theta$ after the $n$-th gradient descent in the $j$-th iteration, and $l_1$ denotes the learning rate, which is chosen by Backtracking-Armijo Line Search~\cite{Armijo}, ensuring that the iteration results converge. The gradient $\nabla_\Theta Q(\Theta|\Theta^{j-1})$ can be obtained by stacking all gradients of $\theta_l$, $\phi_l$, $r_l$ and $\beta_l$ in a column vector, where the gradients of $\theta_l$, $\phi_l$ and $r_l$ can be calculated as

{\setlength\abovedisplayskip{0.0cm}
    \setlength\belowdisplayskip{0.0cm}
    \small
    \begin{align}
       \frac{\partial Q(\Theta|\Theta^{j\!-\!1})}{\partial (\cdot)}\!=\!\mathbb{E}_{\bm{z}\sim p(\bm{z}|\bm{H}, \Theta^{j \! - \! 1})} \!\bigg\{\!\frac{2}{\sigma^2} \!\sum_{s=1}^S \! \Re \!\Big\{\beta_l' \!\cdot\! \frac{\partial \bm{s}^H\!(\theta_l, \phi_l, r_l | z_l)}{\partial (\cdot)} \!\cdot\!\bm{r}\!\Big\} \!\bigg\}.
\end{align}}Here $(\cdot)$ can be $\theta_l$, $\phi$ and $r_l$. Since $\beta_l$ is a complex value, the gradient of $\beta_l$ has a different form, which can be calculated as

{\setlength\abovedisplayskip{-0.1cm}
    \setlength\belowdisplayskip{0.0cm}
    \small
    \begin{align}
       \frac{\partial Q(\Theta|\Theta^{j-1})}{\partial \beta_l} \!=\! \mathbb{E}_{\bm{z}\sim p(\bm{z}|\bm{H}, \Theta^{j\!-\!1})} \! \bigg\{\! \frac{1}{\sigma^2} \sum_{s=1}^S \! \big(\bm{r}^H \!\cdot\! \bm{s}(\theta_l, \phi_l, r_l | z_l) \big) \! \bigg\}.
\end{align}}The derivation of the gradient $\nabla_\Theta Q(\Theta|\Theta^{j-1})$ is given in \textbf{Appendix \ref{appendix1}}.

Repeat the E-step and the M-step, when  $||\Theta^{j}-\Theta^{j-1}||$ is less than the preset convergence threshold $\delta$, the algorithm ends and we can obtain the estimated value of $\Theta$ as well as the hidden variable $\bm{z}$. The proposed EM-based Near-Far Field Channel Modeling algorithm is summarized in \textbf{Algorithm~\ref{algorithm1}}.

\subsection{Outage Probability Performance}
Here, the outage probability (OP) is considered as a metric in this paper for evaluating the system's performance. The expression of OP for the channel model is given in the following proposition.
\begin{prop} \label{proposition1}
The expression of OP for the channel model can be calculated as
{\setlength\abovedisplayskip{0.1cm}
    \setlength\belowdisplayskip{0cm}
\begin{align}
    \label{OP_expression}
    OP &= Pr\Big\{ \log_2(1+\frac{P_{\rm T}}{\hat{\sigma}^2}\bm{h}^H\bm{h}) < R_{\rm th} \Big\} \nonumber \\ 
    &= \! F\bigg(\!\frac{(2^{R_{\rm th}} - 1)\hat{\sigma}^2 }{P_{\rm T} \cdot \sigma^2/2}; 2N, \frac{\left\Vert\sum_{l=1}^L \beta_l \bm{s}(\theta_l, \phi_l, r_l)\right\Vert^2}{\sigma^2/2} \!\bigg),
\end{align}
where $R_{\rm th}$ denotes the target rate}, and $F(x; k, \lambda)$ denotes the cumulative distribution function (CDF) of the non-central chi-squared distribution $\chi_k'^2(\lambda)$. It can be expressed as
{\setlength\abovedisplayskip{0.1cm}
\setlength\belowdisplayskip{0.1cm}
\begin{align}
    F(x; k, \lambda) = e^{\lambda/2} \sum_{t=0}^{\infty} \frac{(\lambda/2)^t}{t!} \frac{\gamma(\frac{k+2j}{2}, \frac{x}{2})}{\Gamma(\frac{k+2j}{2})},
\end{align}
where $\gamma(k,z)$ is the lower incomplete gamma function,} and $\Gamma(s)$ is the ordinary gamma function.
\end{prop}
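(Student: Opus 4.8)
The plan is to reduce the outage event to a one-dimensional statement about the squared channel norm $\bm{h}^H\bm{h}=\|\bm{h}\|^2$, and then show that under the Gaussian channel distribution in \eqref{channel distribution} this squared norm is, after an appropriate rescaling, a non-central chi-squared random variable whose CDF is exactly $F$. First I would manipulate the outage condition algebraically: starting from $\log_2\!\big(1+\tfrac{P_{\rm T}}{\hat\sigma^2}\bm{h}^H\bm{h}\big)<R_{\rm th}$, exponentiating base $2$ and rearranging yields the equivalent event $\|\bm{h}\|^2 < (2^{R_{\rm th}}-1)\hat\sigma^2/P_{\rm T}$. Hence $OP=\Pr\{\|\bm{h}\|^2 < (2^{R_{\rm th}}-1)\hat\sigma^2/P_{\rm T}\}$, and it only remains to identify the distribution of $\|\bm{h}\|^2$.

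Second I would exploit the structure of the complex Gaussian. Writing $\bm{\mu}=\sum_{l=1}^L \beta_l \bm{s}(\theta_l,\phi_l,r_l)$ for the channel mean, \eqref{channel distribution} gives $\bm{h}\sim\mathcal{CN}(\bm{\mu},\sigma^2\bm{I})$. Splitting each of the $N$ entries of $\bm{h}$ into real and imaginary parts produces $2N$ mutually independent real Gaussian variables, each with variance $\sigma^2/2$ (since the circularly symmetric $\mathcal{CN}(0,\sigma^2)$ distributes its variance evenly across the two quadratures) and with means equal to the corresponding real and imaginary parts of $\bm{\mu}$. Dividing each of these $2N$ components by $\sqrt{\sigma^2/2}$ normalizes them to unit variance, so that $\|\bm{h}\|^2/(\sigma^2/2)$ is a sum of the squares of $2N$ independent unit-variance Gaussians. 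By the definition of the non-central chi-squared distribution, this is $\chi_{2N}'^2(\lambda)$ with $k=2N$ degrees of freedom and non-centrality parameter $\lambda=\tfrac{1}{\sigma^2/2}\sum_{i}|\mu_i|^2=\|\bm{\mu}\|^2/(\sigma^2/2)$.

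Finally I would rescale the threshold by $\sigma^2/2$ on both sides of the inequality and substitute the CDF, obtaining $OP=\Pr\{\|\bm{h}\|^2/(\sigma^2/2) < (2^{R_{\rm th}}-1)\hat\sigma^2/(P_{\rm T}\cdot\sigma^2/2)\}=F\big((2^{R_{\rm th}}-1)\hat\sigma^2/(P_{\rm T}\cdot\sigma^2/2);\,2N,\,\|\bm{\mu}\|^2/(\sigma^2/2)\big)$, which is precisely \eqref{OP_expression}. The main obstacle is less a conceptual one than a matter of careful bookkeeping: one must correctly propagate the factor $\sigma^2/2$ through the normalization, tracking that the complex Gaussian assigns half of $\sigma^2$ to each quadrature component. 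Mishandling this factor would introduce spurious powers of two in both the threshold argument and the non-centrality parameter and thus break the match with the stated formula; everything else follows from the standard reduction of $\|\bm{h}\|^2$ to a non-central chi-squared variable.
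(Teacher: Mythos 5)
Your proposal is correct and follows essentially the same route as the paper's own proof: decompose each complex entry of $\bm{h}$ into real and imaginary parts (each Gaussian with variance $\sigma^2/2$), normalize by $\sqrt{\sigma^2/2}$ to identify $\tfrac{2}{\sigma^2}\|\bm{h}\|^2$ as a non-central chi-squared variable with $2N$ degrees of freedom and non-centrality $\|\sum_l\beta_l\bm{s}(\theta_l,\phi_l,r_l)\|^2/(\sigma^2/2)$, then rewrite the outage event in terms of this normalized variable and substitute its CDF. The bookkeeping of the $\sigma^2/2$ factor that you flag as the main pitfall is handled identically in the paper.
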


\begin{proof}
\vspace{-0.2cm}
See \textbf{Appendix \ref{appendix2}}.
\vspace{-0.2cm}
\end{proof}
Based on $\bm{z}$ and parameters $\Theta$, the OP can be obtained from (\ref{OP_expression}).

\section{Simulation Results} \label{SIM}

In this section, simulation results are provided to verify the effectiveness of the proposed channel modeling strategy. We consider a downlink HMIMO communication system, and the simulation configurations are shown in Tabel \ref{sim_config}. $S=100$ channel samples are generated as the input of the channel modeling process of \textbf{Algorithm \ref{algorithm1}}. For comparison, the following schemes are introduced as benchmarks.

\begin{itemize}[leftmargin=*]
    
    \item \textbf{Far-field Scheme}~\cite{OMP_2016}: The algorithm in \cite{OMP_2016} models the far-field channel and uses the Orthogonal Matching Pursuit (OMP) algorithm to acquire channel parameters.
    \item \textbf{Near-field Scheme}~\cite{Cui_2022}: The algorithm in ~\cite{Cui_2022} can be used for near-field channel modeling,  and the polar-domain simultaneous orthogonal matching pursuit (P-SOMP) algorithm based on the OMP algorithm is used to obtain near-field channel parameters.
    \item \textbf{Monte Carlo (MC) Simulations}: $10^5$ channel samples are generated to validate the effectiveness of the modeled channel from the proposed EM framework, i.e., \textbf{Algorithm~\ref{algorithm1}}, by utilizing the MC method. 
\end{itemize}

For the sake of description, we introduce a parameter $K$, similar to the Rician factor, defined as $K = \frac{\mathbb{E}||\sum_{l=1}^L \beta_l \bm{s}(\theta_l, \phi_l, r_l)||^2}{\mathbb{E}||\bm{Z}||^2}$, which represents the power ratio between $L$ deterministic paths and the additional diffuse signal components of the fading channel shown in (\ref{channel model}). Fig. \ref{different_K} investigates the OP of the proposed model and the two benchmark models under different values of $K$. From this figure, we can observe that the channel model obtained from the proposed EM-based algorithm shows an excellent agreement with the MC simulations, which confirms its ability to accurately capture the channel characteristics and verifies the accuracy of the derived expression of OP in (\ref{OP_expression}). As the value of $K$ decreases (i.e., given the power for the deterministic components while increasing the power of the diffuse components), the gaps from the near-field scheme and the far-field scheme to the MC simulations become smaller. The reason is that the impact of estimation errors is mainly on the deterministic components of the channel. As $K$ decreases, the contribution of the deterministic components to the overall signals decreases, leading to the decrease of the corresponding OP. Nevertheless, the OP obtained from the proposed scheme is always consistent with the MC simulations, indicating its effectiveness under different values of $K$.

\linespread{1.1}
\begin{table}
    \centering
    \caption{Simulation Configuration}
    \begin{tabular}{|c|c|}
        \hline
        The number of UPA antennas $N_1 \times N_2$ & $256\times 16$  \\
        \hline
        The wavelength $\lambda$ & $10$ mm \\
        \hline
        The spacing between adjacent antennas $d$ & $\lambda/2$ \\
        \hline
        The number of channel paths $L$ & $4$ \\
        \hline
        The ratio of near-field paths to far-field paths $\gamma$ & $1$ \\
        \hline
        The distribution of $\beta$ & $\lambda/4\pi r \cdot \mathcal{CN}(0, 1)$ \\
        \hline
        The distribution of $\theta$ & $\mathcal{U} (\pi/3, 2\pi/3)$ \\
        \hline
        The distribution of $\phi$ & $\mathcal{U} (-\pi/6, \pi/6)$ \\
        \hline
        The distribution of $r$ & $\mathcal{U}(4m, r_{RD})$ \tablefootnote{$r_{RD}$ represents the commonly adopted Rayleigh distance as the boundary between far-field and near-field regions, which can be calculated as $r_{RD}= \frac{2(N_1^2+N_2^2)d^2}{\lambda}$ \cite{Rayleigh Distance}.} \\
        \hline
        The transmit power of BS $P_{\rm T}$ & $40$ dBm \\
        \hline
        The noise variance at the receiver & $-96$ dBm \\
        \hline
        The number of channel samples $S$ & $100$\\
        \hline
    \end{tabular}
    \label{sim_config}
    \vspace{-0.3cm}
\end{table}


\begin{figure}
    \vspace{-0.2cm}
        \centering
        \includegraphics[width=0.48\textwidth, height=0.24\textheight]{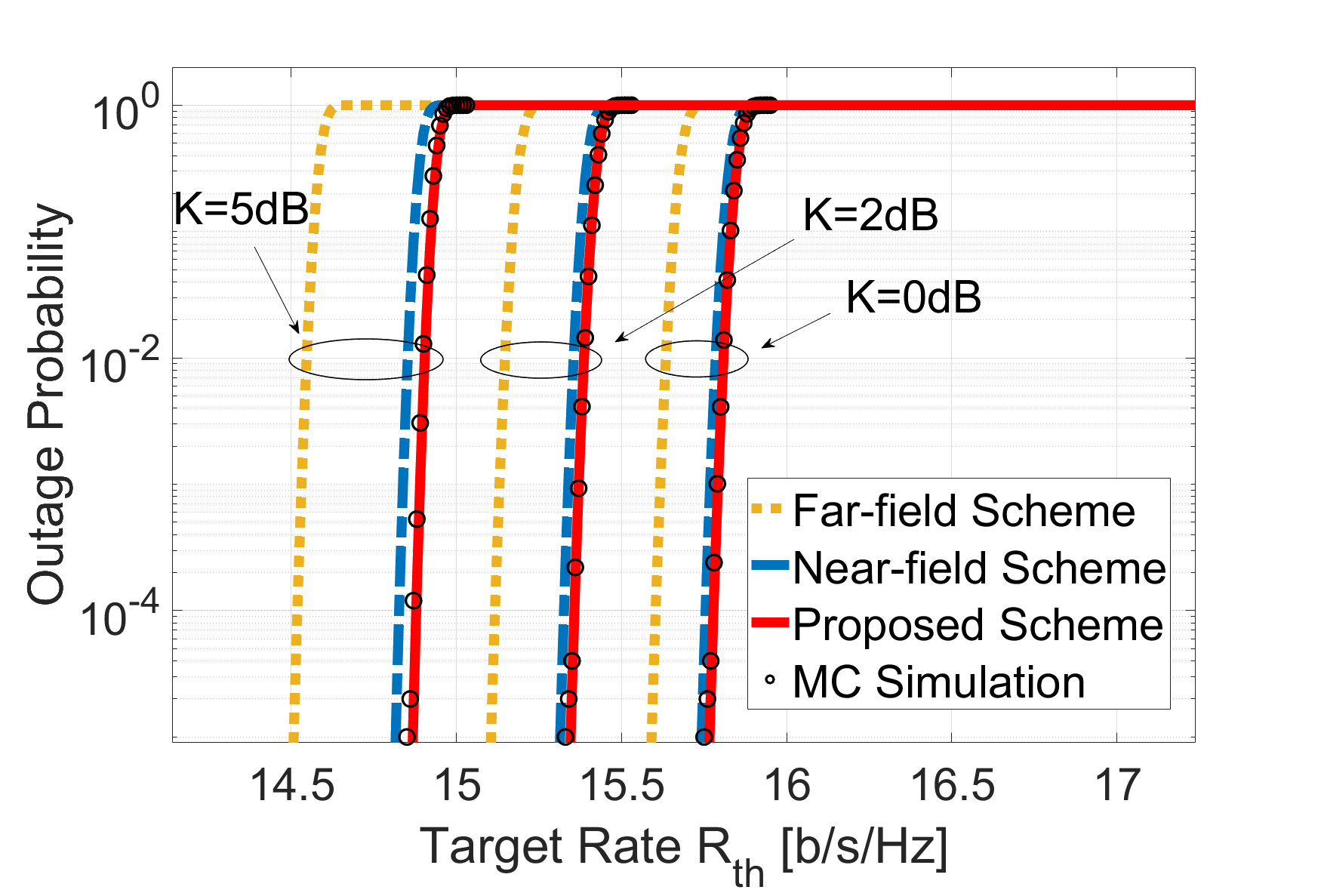}
        \caption{Outage probability performance vs. target rate $R_{\rm th}$  under different $K$, with $N_1=256$, $N_2=16$ and $\gamma=1$.}
        \label{different_K}
     \vspace{-0.1cm}
\end{figure}

\vspace{-0.1cm}
\begin{figure}
        \centering
        \includegraphics[width=0.48\textwidth]{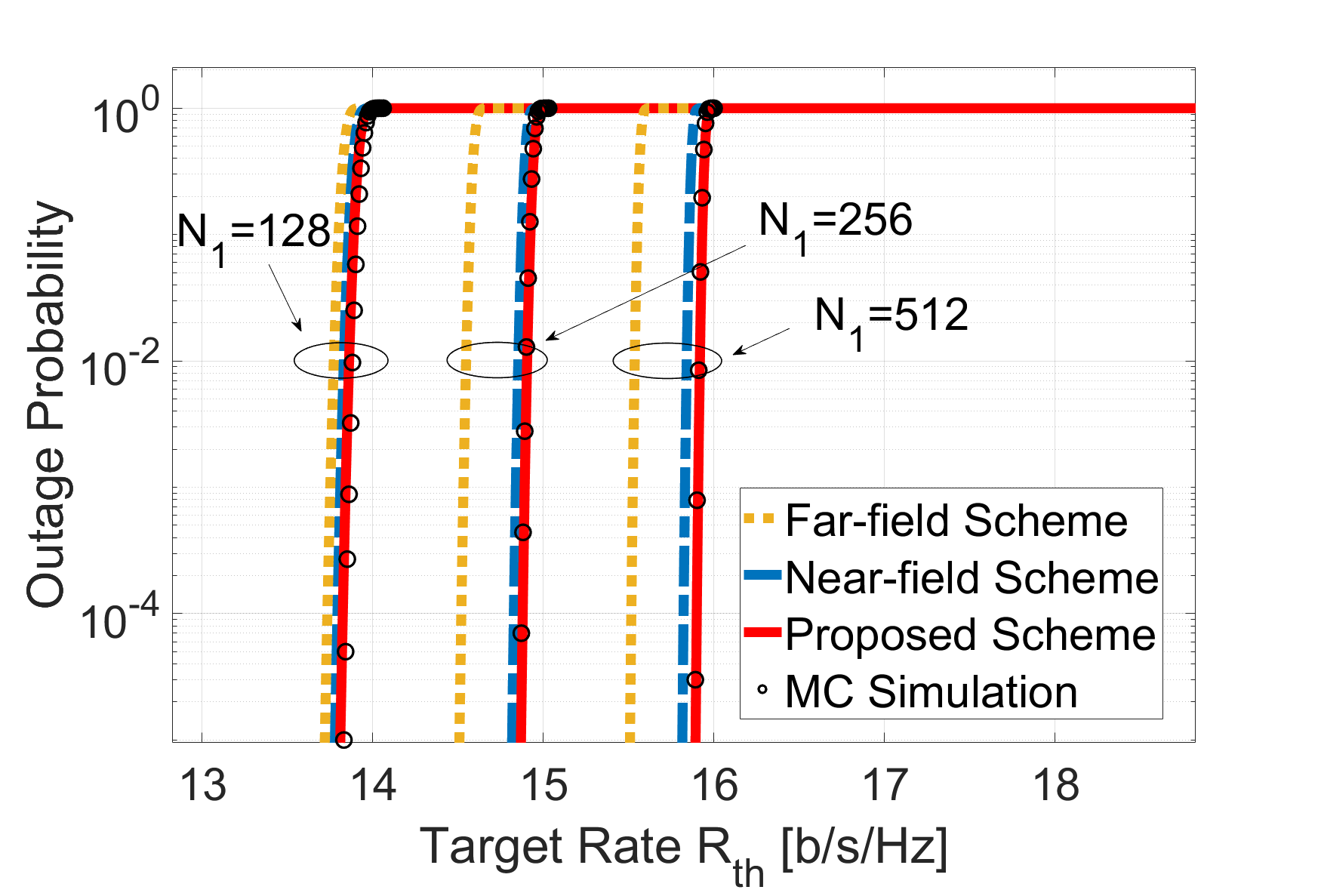}
        \caption{Outage probability performance vs. target rate $R_{\rm th}$ under different $N_1$, with $K=5$ dB and $\gamma=1$.}
        \label{different_N}
\end{figure}
\begin{figure}
        \centering
        \includegraphics[width=0.48\textwidth]{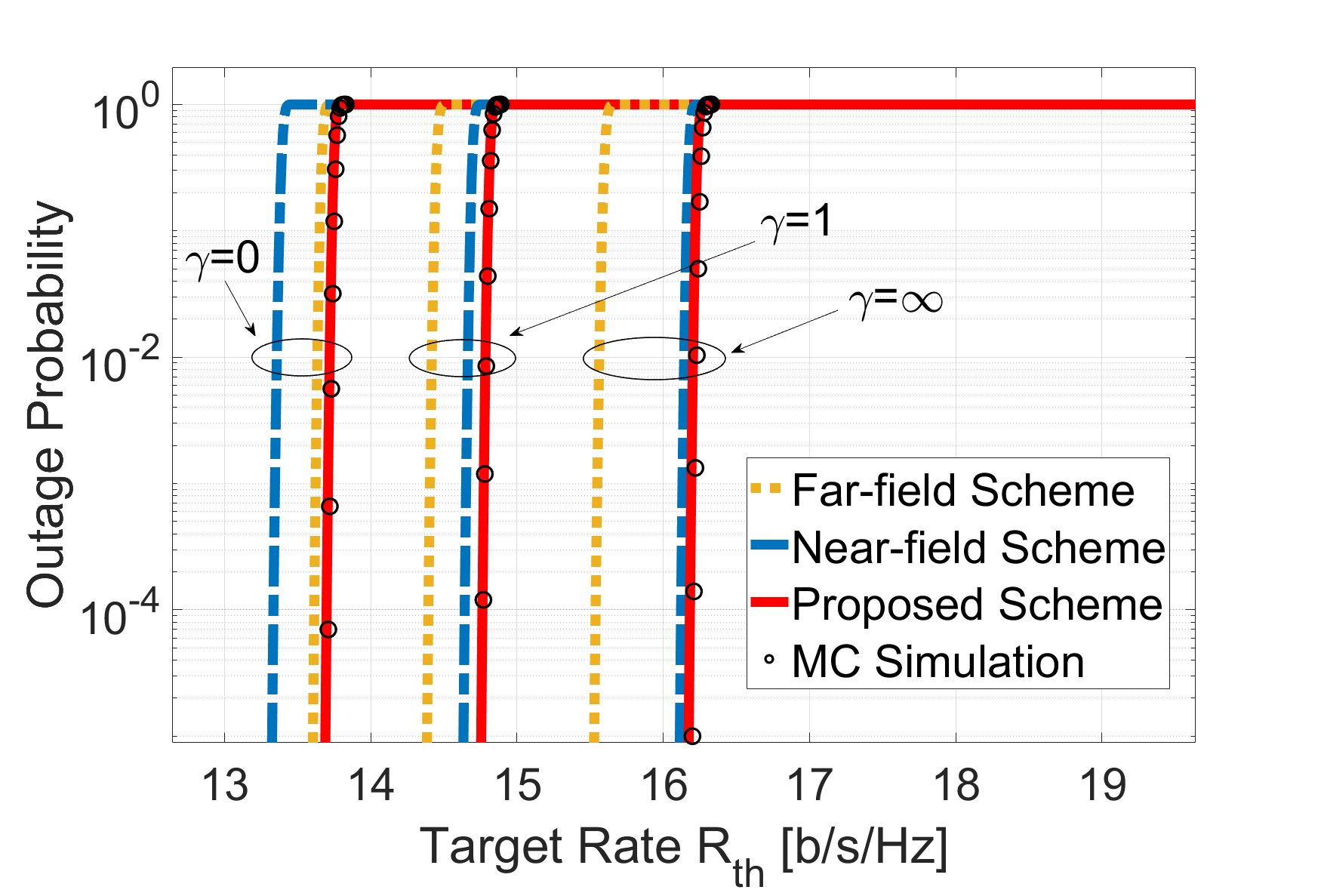}
        \caption{Outage probability performance vs. target rate $R_{\rm th}$ under different $\gamma$, with $K=5$dB, $N_1 = 256$ and $N_2 = 16$.}
        \label{different_gamma}
\end{figure}

Fig. \ref{different_N} shows the OP with different numbers of antennas, where we set $K=5$ dB and $N_2=16$. We can observe that the OP of the proposed scheme fits the MC simulations better than the near-field and far-field schemes.
However, as the number of antennas increases, the OPs of the near-field and far-field schemes show a significant degradation. This is because the near-field effect intensifies and the difference between near-field and far-field paths becomes more significant as the number of antennas increases\footnote{We assume that the near-field path is more accurately represented by the near-field steering vector, while the far-field path is more appropriately characterized by the far-field steering vector.}. Therefore, the estimation errors of the channel parameters using purely near-field or far-field algorithms become larger. The proposed scheme can adaptively determine whether each path is near-field or far-field, thus fitting the channels better.

Fig. \ref{different_gamma} shows the OP with different numbers of scatters in the near-field. Here, we use $\gamma$ to indicate the proportion of near-field paths to far-field paths, where $\gamma = 0$ means that all paths are far-field. Simulation results showed that our proposed scheme is superior to conventional near-field and far-field schemes to capture the channel environment with an arbitrary value of $\gamma$. The conventional far-field algorithm exhibits poor performance as the proportion of near-field paths in the channel increases. In contrast, we can find that the proposed model has the best agreement with the MC simulations in all three cases, which indicates that our model can also be applied in both entirely far-field and entirely near-field scenarios.

\section{Conclusion} \label{Conclusion}
In this paper, we have developed a new near-far channel model to capture the wireless propagation environment in HMIMO communications where the near field extends. By modeling whether each path is near-field or far-field as a hidden variable, we employed the EM algorithm to adaptively adjust the hidden variables and obtain the channel parameters, thereby modeling the practical near-far field channel. From the simulation results, we can obtain two conclusions: 1) The proposed near-far field model matches the MC simulation better compared to traditional near-field and far-field models; 2) The modeling accuracy still holds for different numbers of antennas and the near-far path ratio.  

\begin{appendices} 
\section{Derivation of the Gradient} \label{appendix1}
\vspace{-0.1cm}

In this appendix, the gradient of the log-likelihood function $Q(\Theta|\Theta^{j-1})$ with respect to $\theta$, $\phi$, and $r$ is provided. Firstly, the expression for $Q(\Theta|\Theta^{j-1})$ can be further simplified as

\vspace{-0.1cm}
{\setlength\abovedisplayskip{0.1cm}
    \setlength\belowdisplayskip{0.1cm}
\begin{align}
\label{simplified_Q}
    &Q(\Theta | \Theta^{j-1}) = \mathbb{E}_{\bm{z}\sim p(\bm{z}|\bm{H}, \Theta^{j-1})} \Big[\log( p(\bm{H}, \bm{z} | \Theta) )\Big] \nonumber \\
    &= \mathbb{E}_{\bm{z}\sim p(\bm{z}|\bm{H}, \Theta^{j-1})} \Big[\sum_{s=1}^S \log \Big(p(\bm{h}_s | \bm{z}, \Theta) \Big) + \log \Big( p(\bm{z} | \Theta)\Big)\Big]~.
\end{align}}

\noindent Here, $\log \Big(p(\bm{h}_s | \bm{z}, \Theta) \Big)$ is calculated according to (\ref{channel distribution}) as 
{\setlength\abovedisplayskip{0.1cm}
    \setlength\belowdisplayskip{0.1cm}
\begin{align}
\label{log_prob_h_s_to_z_Theta}
\log \Big(p(\bm{h}_s | \bm{z}, \Theta) \Big) \!&=\!
\log \bigg( \mathcal{CN}\Big(\sum_{l=1}^L \beta_l \bm{s}^H(\theta_l, \phi_l, r_l | z_l), \sigma^2\bm{I}\Big) \bigg) \nonumber\\
    \!&=\!-\bm{r}^{H}(\sigma^2\bm{I})^{-1}\bm{r} \!-\! \log \!\bigg( (\pi)^{N} |\sigma^2 I| \!\bigg),
\end{align}
where $\bm{r}=\bm{h}_s - \sum\limits_{l=1}^L \beta_l \bm{s}^H(\theta_l, \phi_l, r_l | z_l)$.} We assume that all scatters are equally likely to be distributed in the near-field and far-field regions, thus $p(\bm{z}|\Theta)$ is a constant, i.e., $p(\bm{z}|\Theta)=\frac{1}{2^L}$.

For $\theta_l$ of the $l$-th path, the gradient of $Q(\Theta|\Theta^{j-1})$ can be expressed as
{\setlength\abovedisplayskip{0.2cm}
\setlength\belowdisplayskip{0.2cm}
\begin{align}
    \frac{\partial Q(\Theta|\Theta^{j\!-\!1})}{\partial \theta_l}
    \!=\! \mathbb{E}_{\bm{z}\sim p(\bm{z}|\bm{H}, \Theta^{j-1})} \bigg\{ & \frac{\partial \Big[\sum_{s=1}^S \! \log \Big(p(\bm{h}_s | \bm{z}, \Theta) \Big) \Big]}{\partial \theta_l} \bigg\} \nonumber \\
    \!=\!\mathbb{E}_{\bm{z}\sim p(\bm{z}|\bm{H}, \Theta^{j-1})} \bigg\{ \! & \sum_{s=1}^S\frac{\partial \log \Big(p(\bm{h}_s | \bm{z}, \Theta) \Big) }{\partial \theta_l} \! \bigg\} .
\end{align}}According to (\ref{log_prob_h_s_to_z_Theta}), the gradient of $\log \!\Big(\!p(\bm{h}_s | \bm{z}, \!\Theta) \!\Big)\!$ is given by
{\setlength\abovedisplayskip{0.2cm}
    \setlength\belowdisplayskip{0.2cm}
\begin{align}
    \frac{\partial \log \Big(p(\bm{h}_s | \bm{z}, \Theta) \Big)}{\partial \theta_l} 
    &=-\frac{1}{\sigma^{2}} \frac{\partial \Big( \bm{r}^{H}\bm{r} \Big)}{\partial \theta_l} ~.
\end{align}
Since $\bm{r}=\bm{h}_s - \sum\limits_{l=1}^L \beta_l \bm{s}^H(\theta_l, \phi_l, r_l | z_l)$,} the gradient of $\bm{r}^H \bm{r}$ is given by
{\setlength\abovedisplayskip{0.2cm}
    \setlength\belowdisplayskip{0.2cm}
    \small
\begin{align}
    \label{partial_r_H_r_to_theta}
    \frac{\partial \Big( \bm{r}^{H}\bm{r} \Big)}{\partial \theta_l}
    =& (- \beta_l') \!\cdot\! \frac{\partial \bm{s}^H(\theta_l, \phi_l, r_l | z_l)}{\partial \theta_l} \!\cdot\! \bm{r} + \bm{r}^H \!\cdot\! (-\beta_l)\! \cdot \!\frac{\partial \bm{s}(\theta_l, \phi_l, r_l | z_l)}{\partial \theta_l} \nonumber \\
    =& -2 \Re \bigg\{\beta_l' \!\cdot\! \frac{\partial \bm{s}^H(\theta_l, \phi_l, r_l | z_l)}{\partial \theta_l} \!\cdot\! \bm{r} \bigg\}.
\end{align}}

\noindent By combining (\ref{simplified_Q})-(\ref{partial_r_H_r_to_theta}), we are able to derive the gradient of \(Q(\Theta|\Theta^{j-1})\) with respect to \(\theta_l\). The gradients of \(Q(\Theta|\Theta^{j-1})\) with respect to \(\phi_l\) and \(r_l\) can be obtained in a similar way. The only difference for the gradient of $\beta_l$ is that (\ref{partial_r_H_r_to_theta}) is changed to
{\setlength\abovedisplayskip{0.2cm}
    \setlength\belowdisplayskip{0.2cm}
\begin{align}
    \frac{\partial \Big( \bm{r}^{H}\bm{r} \Big)}{\partial \beta_l}
    =-\bm{r}^H \cdot \bm{s}(\theta_l, \phi_l, r_l | z_l).
\end{align}}

Finally, by stacking all gradients of $\theta_l$, $\phi_l$, $r_l$ and $\beta_l$ in a column vector, we can obtain the gradients $\nabla_\Theta Q(\Theta|\Theta^{j-1})$.

\vspace{0.1cm}
\section{Proof of Proposition \ref{proposition1}} \label{appendix2}
Here, we define $\bm{h}=[h_1, \cdots, h_N]^H$, where $h_i = X_i+jY_i$. According to (\ref{channel distribution}), $X_i \sim \mathcal{N}(\mu_{X_i}, \sigma^2/2)$, $Y_i \sim \mathcal{N}(\mu_{Y_i}, \sigma^2/2)$, where $\mu_{X_i}$ and $\mu_{Y_i}$ are the mean of $X_i$ and $Y_i$, respectively, $\mathcal{N}(\mu, \sigma^2)$ represents normal distribution with the mean $\mu$ and the variance $\sigma^2$. In this way, $||\bm{h}||^2$ can be further expressed as
{\setlength\abovedisplayskip{0.1cm}
    \setlength\belowdisplayskip{0.1cm}
    \begin{align}
        \label{h_2}
        ||\bm{h}||^2 = \sum_{i=1}^{N} \Big(X_i^2 + Y_i^2 \Big).
    \end{align}
    Since $X_i$ and $Y_i$ have the same variance $\sigma^2$,} $\frac{X_i}{\sqrt{\sigma^2/2}}$ and $\frac{Y_i}{\sqrt{\sigma^2/2}}$ satisfy 
{\setlength\abovedisplayskip{0.2cm}
    \setlength\belowdisplayskip{0.1cm}
    \begin{align}
        \frac{X_i}{\sqrt{\sigma^2/2}} \sim \mathcal{N} \Big(\frac{\mu_{X_i}}{\sqrt{\sigma^2/2}}, 1 \Big),
        \frac{Y_i}{\sqrt{\sigma^2/2}} \sim \mathcal{N} \Big(\frac{\mu_{Y_i}}{\sqrt{\sigma^2/2}}, 1 \Big).
    \end{align}
    According to (\ref{h_2}),}
    {\setlength\abovedisplayskip{0.2cm}
    \setlength\belowdisplayskip{0.1cm}
    \begin{align}
        \frac{2}{\sigma^2}||\bm{h}||^2 = \sum_{i=1}^{N}\Big(\frac{X_i^2}{\sigma^2/2} + \frac{Y_i^2}{\sigma^2/2}\Big),
    \end{align}
    i.e.,} $\frac{2}{\sigma^2}||\bm{h}||^2$ is the sum of squares of $2N$ independent, normally distributed random variables with different means and unit variances. Therefore, $\frac{2}{\sigma^2}||\bm{h}||^2$ follows the non-central chi-squared distribution~\cite{noncentral_chi}. I.e., 
    {\setlength\abovedisplayskip{0.1cm}
    \setlength\belowdisplayskip{0.1cm}
    \begin{align}
        \frac{2}{\sigma^2}||\bm{h}||^2 \sim \chi_k'^2(\lambda),
    \end{align}
    where $k$ is the number of degrees of freedom,} and $\lambda$ is the non-centrality parameter. The number of degrees of freedom is equal to the number of independent, normally distributed random variables, i.e., $k=2N$. The non-centrality parameter $\lambda$ is equal to the sum of squares of the means of the random variables, which can be expressed as
    {\setlength\abovedisplayskip{0.2cm}
    \setlength\belowdisplayskip{0.2cm}
    \begin{align}
        \lambda &= \sum_{i=1}^N \Big(\frac{\mu_{X_i}^2}{\sigma^2/2} + \frac{\mu_{Y_i}^2}{\sigma^2/2} \Big).
    \end{align}
    According to (\ref{channel distribution}), }
     {\setlength\abovedisplayskip{0.1cm}
    \setlength\belowdisplayskip{0.1cm}
    \begin{align}
    \sum_{i=1}^N \Big(\mu_{X_i}^2 + \mu_{Y_i}^2 \Big) = \left \Vert \sum_{l=1}^L \beta_l \bm{s}(\theta_l, \phi_l, r_l) \right \Vert^2 .
    \end{align}
    Therefore,} $\lambda$ can be calculated as
{\setlength\abovedisplayskip{0.1cm}
    \setlength\belowdisplayskip{0.1cm}
    \begin{align}
        \lambda = \frac{\left\Vert\sum_{l=1}^L \beta_l \bm{s}(\theta_l, \phi_l, r_l)\right\Vert^2}{\sigma^2/2},
    \end{align}
    and $\frac{2}{\sigma^2}||\bm{h}||^2$ follows}
{\setlength\abovedisplayskip{0.1cm}
    \setlength\belowdisplayskip{0.2cm}
    \begin{align}
        \frac{2}{\sigma^2}||\bm{h}||^2 \sim \chi_{2N}'^2 \bigg(\frac{\left\Vert\sum_{l=1}^L \beta_l \bm{s}(\theta_l, \phi_l, r_l) \right\Vert^2}{\sigma^2/2} \bigg).
    \end{align}
    The expression of OP in (\ref{OP_expression}) can be reformulated as}
    {\setlength\abovedisplayskip{0.2cm}
    \setlength\belowdisplayskip{0.2cm}
    \begin{align}
        OP &=  Pr\Big\{ \log_2(1+\frac{P_{\rm T}}{\hat{\sigma}^2}\bm{h}^H\bm{h}) < R_{\rm th} \Big\} \nonumber \\ 
        &= Pr\Big\{ \frac{2}{\sigma^2}||\bm{h}||^2 < \frac{(2^{R_{\rm th}} - 1)\hat{\sigma}^2 }{P_{\rm T} \cdot \sigma^2/2} \Big\} \nonumber \\
        &= F\bigg(\frac{(2^{R_{\rm th}} - 1)\hat{\sigma}^2 }{P_{\rm T} \cdot \sigma^2/2}; 2N, \frac{ \left\Vert \sum_{l=1}^L \beta_l \bm{s}(\theta_l, \phi_l, r_l)\right\Vert^2}{\sigma^2/2} \bigg),
    \end{align}
    where $F(x; k, \lambda)$ denotes the CDF of the non-central chi-squared distribution $\chi_k'^2(\lambda)$.}

\end{appendices}

\vspace{0.2cm}
\addtolength{\textheight}{-0.01in}

\end{document}